\documentclass[pra,superscriptaddress,preprint]{revtex4}
\usepackage[english]{babel}
\usepackage{graphicx}
\usepackage{dcolumn}
\usepackage{bm}
\usepackage{dsfont}
\usepackage{enumerate}
\usepackage{color}
\usepackage{mathrsfs}
\usepackage{braket}
\usepackage{amssymb,amsmath,amsthm}
 \newtheorem{claim}{Claim}
 \newtheorem{remark}{Remark}
\usepackage{epsfig}
\begin{document}

\title{Bounds on skew information and local quantum uncertainty \\ for state conversion}
\author{Liang Qiu}
\affiliation{%
	School of Physics, China University of Mining and Technology, Xuzhou, Jiangsu 221116, China%
	}
\email{lqiu@cumt.edu.cn}
\affiliation{%
    Institute for Quantum Science and Technology, University of Calgary, Alberta T2N 1N4, Canada%
	}

\author{Yu Guo}
\affiliation{%
	Institute for Quantum Science and Technology, University of Calgary, Alberta T2N 1N4, Canada%
	}
\affiliation{%
	Institute of Quantum Information Science, Shanxi Datong University, Datong, Shanxi 037009, China%
	}

\author{Barry C. Sanders}
\affiliation{%
    Institute for Quantum Science and Technology, University of Calgary, Alberta T2N 1N4, Canada%
    }
\affiliation{%
    Program in Quantum Information Science,
    Canadian Institute for Advanced Research,
    Toronto, Ontario M5G 1Z8, Canada%
    }
\affiliation{%
    Hefei National Laboratory for Physical Sciences at Microscale,
    University of Science and Technology of China, Hefei, Anhui 230026, China
    }
\affiliation{%
    Shanghai Branch,
    CAS Center for Excellence and Synergetic Innovation Center
        in Quantum Information and Quantum Physics,
    University of Science and Technology of China, Shanghai 201315, China
    }

\date{\today}
\begin{abstract}
We establish strict upper bounds on local quantum uncertainty (LQU) and skew information associated with state conversion via certain quantum channels. Specifically, we obtain a bound on the achievable LQU for bipartite channels whose Kraus operators commute with nondegenerate von Neumann measurements on the first subsystem, and this LQU bound is expressed in terms of the skew information for the first subsystem. Furthermore, we establish a bound on the skew information of one subsystem obtained from any initial bipartite state subject to any quantum steering channel, and this bound is expressed in terms of the LQU for the initial joint-system state. Our two claims show that state conversion has fundamental limitations relating LQU with skew information.
\end{abstract}

\maketitle

\section{Introduction}

Skew information, motivated by the study of quantum measurements in the presence of a conserved quantity~\cite{Araki1960,Yanase1961}, was introduced by Wigner and Yanase in 1963~\cite{Wigner1963} and was originally used to describe the information content of mixed states.\ Luo demonstrated that the statistical idea underlying skew information is the Fisher information, used for the theory of statistical estimation\ \cite{Luo2003}, while Fisher information is not only a key notion of statistical inference\ \cite{Fisher1925,Cramer1974} but also plays an increasing role in informational treatments of physics\ \cite{Helstrom1976,Holevo1982,Frieden1995,Luo2002,Frieden2004}. Based on skew information, an intrinsic measure for synthesizing quantum uncertainty of a mixed state has been proposed\ \cite{Luo2006}. The measure for correlations in terms of skew information is also given, and the evaluation of the measure does not involve any optimization, in sharp contrast to the case for entanglement and discord measures~\cite{Luo2012}. As for correlations, Girolami et al.\ defined and investigated a class of measures of bipartite quantum correlations of the discord type\ \cite{Modi2012} through local quantum uncertainty (LQU)~ \cite{Girolami2013}, which is also obtained from the skew information.\ Skew information is also proven to be an asymmetry monotone in Refs.\ \cite{Girolami2014,Marvian2014}. Moreover, Girolami originally proposed the skew information as a coherence monotone~\cite{Girolami2014}; however, such a quantity may increase under the action of incoherent operations~\cite{Du2015}. In other words, skew information violates monotonicity, which is one of the postulates that any quantifier of coherence should fulfill~\cite{Baumgratz2014}.

Coherence of a single system can be traded for quantum correlations.\ The relation between coherence and entanglement is studied in Refs.\ \cite{Streltsov2015,Yao2015,Davide2015,Chitambar2016}. The link between specific coherence and discord-type measures has also been reported in Refs.\ \cite{Xi2015,Yao2015,Datta2012}. The interplay between coherence and quantum discord in multipartite systems has been investigated in Ref.~\cite{Ma2016}. Motivated by these results, we now investigate bounds relating skew information to LQU.
Specially, we prove that the LQU created between a single partite and an ancilla by quantum operations is bounded above by the initial skew information of the single system.

Quantum steering is a process by which Alice can steer the quantum state of Bob by her local selective measurement if they initially share a correlated quantum system. It is a kind of nonlocal correlation introduced by Schr\"{o}dinger~\cite{Schrodinger1935,Schrodinger1936} to reinterpret the Einstein-Podolsky-Rosen (EPR) paradox~\cite{Einstein1935}. According to Schr\"{o}dinger, entanglement between two subsystems in a bipartite state is the vital ingredient in quantum steering. EPR steering has received much attention both theoretically and experimentally \cite{Wiseman2007,Saunders2010,Handchen2012,Skrzypczyk2014}. Quantum steering is intimately connected to remote state preparation \cite{Babichev2004}. The power of Alice's local probabilistic measure to create coherence on Bob's side is also investigated~\cite{Hu2016}. Moreover, the complementarity relations between coherence measured on mutually unbiased bases using various coherence measures have been obtained~\cite{Monda2015}.

The converse procedure of converting local coherence to quantum correlations, i.e., to extract local coherence from a spatially separated yet quantum correlated bipartite system, is of importance.\ Chitambar et al.\ introduced and studied the task of assisted coherence distillation, where local quantum-incoherent operations and classical communication are employed\ \cite{ChitambarStreltsov2016}.\ Coherence can also be extracted from measurement-induced disturbance~\cite{Hu2015}, while the latter characterizes the quantumness of correlations~\cite{Luo2008}.
Motivated by these results, we also investigate bounds on skew information with respect to LQU
with these bounds arising from quantum steering channels:
we find that LQU is the upper bound of skew information.

The paper is organized as follows. In Sec.~II, we introduce the definitions of skew information and LQU.
Bounding skew information by LQU is investigated in Sec.~III.
Subsequently, we investigate a bound on skew information from LQU for the process of quantum steering in Sec.~IV. We conclude in Sec.~V.

\section{skew information and local quantum uncertainty}
Skew information is
\begin{equation}
	I(\rho,X):=-\frac{1}{2}\operatorname{Tr}[\rho^{1/2},X]^2,
\end{equation}
where $\operatorname{Tr}$ denotes the trace, and $[\bullet,\bullet]$ denotes the commutator. $\rho$ is a general quantum state and~$X$ is an observable, which is a self-adjoint operator, and~$X$ can be a Hamiltonian.\ If $[\rho,X]=0$, then~$X$ is conserved~\cite{Luo2012}.\ The skew information provides an alternative measure of the information content for $\rho$ with respect to observables not commuting with the conserved quantity~$X$~\cite{Wigner1963}.

Skew information has the following nice properties.

(1) For pure states ($\rho^2=\rho$), $I(\rho,X)$ reduces to the conventional variance $V(\rho,X):=\operatorname{Tr}\rho X^2-(\operatorname{Tr}\rho X)^2$ and~$I(\rho,X)$ satisfies $0\leq I(\rho,X)\leq V(\rho,X)$.

(2) $I(\rho,X)$ is convex, which means the skew information decreases when several states are mixed~\cite{Wigner1963,Lieb1973}:
\begin{equation}
I\left(\sum_i\alpha_i\rho_i,X\right)\leq\sum_i\alpha_iI(\rho_i,X),\ \alpha_i\in\mathds{R}.
\end{equation}
Here the probability distribution $\{\alpha_i\}$ satisfies $\sum_i\alpha_i=1$ and $0\leq \alpha_i\leq 1$. On the contrary, the variance $V(\rho,X)$ is concave in $\rho$.

(3) In the Hilbert space $\mathscr{H}_A\otimes \mathscr{H}_B$ of a composite system $AB$, the skew information of quantum state $\rho_{AB}$ and that of the reduced density matrix
$\rho_A:=\operatorname{Tr}_B\rho_{AB}$ have the relation~\cite{Lieb1973}
\begin{equation}
	I\left(\rho_{AB},X_A\otimes \mathds{I}_B\right)\geq I(\rho_A,X_A)
\end{equation}
for any observable $X_A$ in $\mathscr{H}_A$. Here $\mathds{I}_B$ is the identity operator in $\mathscr{H}_B$.

The state $\rho$ and the observable~$X$ fix the skew information $I(\rho,X)$. In order to eliminate the observable on skew information, and get an intrinsic quantity capturing the information content of $\rho$, Luo introduced the average~\cite{Luo2006,LiX2011}
\begin{equation}
	Q\left(\rho\right)
		:=\sum\limits_{i=1}^{n^2}I(\rho,X^i),
\end{equation}
where $\{X^i\}$ constitutes an orthonormal basis for an $n^2$-dimensional Hilbert space $\mathcal{L}(\mathscr{H})$ of all observables on $n$-dimensional quantum system with the Hilbert-Schmidt inner product $\langle X,Y\rangle=\operatorname{Tr}XY\in\mathds{R}$. Then $Q\left(\rho\right)$ depends only on the quantum state $\rho$ and is independent of choice of the orthonormal basis $\{X^i\}$. $Q\left(\rho\right)$ is considered not only to be a measure of information content of $\rho$, but also a measure of quantum uncertainty~\cite{Luo2006}.
The global information content of $\rho_{AB}$ in terms of local observables of $n$-dimensional quantum system $A$ is
\begin{equation}
Q_A(\rho_{AB})=\sum\limits_{i=1}^{n^2}I(\rho_{AB},X_A^i\otimes \mathds{I}_B),\ X_A^i\in\mathcal{L}(\mathscr{H}_A).
\end{equation}

Girolami et al.\ defined LQU as the minimum skew information achievable on local von Neumann measurement of a subsystem\ \cite{Girolami2013}.\ Specifically, for a bipartite quantum state $\rho_{AB}$, we consider the local observable $K^\Lambda=K_A^\Lambda\otimes \mathds{I}_B$ such that $K_A^\Lambda$ is a Hermitian operator on subsystem $A$ with nondegenerate spectrum $\Lambda$.\ Optimized over all local observables on $A$ with nondegenerate spectrum $\Lambda$, the LQU with respect to subsystem $A$ is
\begin{equation}
\mathcal{U}_A^\Lambda=\min\limits_{K^\Lambda_A}I(\rho_{AB},K^\Lambda).
\end{equation}
LQU is shown to be a full-fledged measure of bipartite quantum correlations, and it can be evaluated analytically for the case of bipartite $2\times d$ systems \cite{Girolami2013}.

\section{Bounds for skew information and local quantum uncertainty}
In terms of the Kraus representation, a completely positive trace-preserving map~$\Phi$ transforms $\rho$ into another state $\Phi\left(\rho\right)$ by
\begin{equation}
	\Phi\left(\rho\right)=\sum\limits_{j}E_{j}\rho E_{j}^{\dag}
\end{equation}
with the unit trace condition $\operatorname{Tr}\Phi\left(\rho\right)=1$ leading to $\sum_{j}E_{j}^{\dag}E_{j}=\mathds{I}$. Now we present the first claim of the paper.

\begin{claim}
\label{claim:LQUSI}
If $K_A^\Lambda$ is any von Neumann measurement acting on $\mathscr{H}_A$ with nondegenerate spectrum $\Lambda$, the LQU created between a state $\rho_A$ in system $A$ and an arbitrary ancilla $\tau_B$ in system $B$ by quantum operation~$\Phi$ with Kraus operators $E_{j}$s commuting with $K_A^\Lambda\otimes \mathds{I}_B$, i.e., $[E_{j},K_A^\Lambda\otimes \mathds{I}_B]=0$, is bounded above by the skew information of $\rho_A$ in terms of $K_A^\Lambda$:
\begin{equation}
	\max\limits_{\tau_B}\mathcal{U}_A^\Lambda(\Phi(\rho_A\otimes \tau_B))\leq I(\rho_A,K_A^\Lambda).\label{result1}
\end{equation}
\end{claim}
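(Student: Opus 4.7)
The plan is to chain three facts: (i) the LQU is majorised by the skew information with respect to the particular $K_A^\Lambda$ appearing in the hypothesis; (ii) this skew information is non-increasing under $\Phi$ because the Kraus operators commute with $K_A^\Lambda\otimes\mathds{I}_B$; and (iii) on the product input $\rho_A\otimes\tau_B$ the joint skew information collapses to $I(\rho_A,K_A^\Lambda)$.

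Step~(i) is immediate from the definition of LQU as a minimum over admissible local observables, which yields $\mathcal{U}_A^\Lambda(\Phi(\rho_A\otimes\tau_B))\leq I(\Phi(\rho_A\otimes\tau_B),K_A^\Lambda\otimes\mathds{I}_B)$. For step~(ii) I would use a Stinespring dilation: introduce an environment $\mathscr{H}_E$ with orthonormal basis $\{|j\rangle_E\}$ and define the isometry $V:\mathscr{H}_A\otimes\mathscr{H}_B\to\mathscr{H}_A\otimes\mathscr{H}_B\otimes\mathscr{H}_E$ by $V|\psi\rangle=\sum_j(E_j|\psi\rangle)\otimes|j\rangle_E$, so that $V^\dag V=\mathds{I}$ and $\Phi(\sigma)=\operatorname{Tr}_E(V\sigma V^\dag)$. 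The commutation $[E_j,K_A^\Lambda\otimes\mathds{I}_B]=0$ translates into the intertwining $V(K_A^\Lambda\otimes\mathds{I}_B)=(K_A^\Lambda\otimes\mathds{I}_{BE})V$. Squaring shows $(V\sigma V^\dag)^{1/2}=V\sigma^{1/2}V^\dag$, so the commutator $[(V\sigma V^\dag)^{1/2},K_A^\Lambda\otimes\mathds{I}_{BE}]$ equals $V[\sigma^{1/2},K_A^\Lambda\otimes\mathds{I}_B]V^\dag$; taking the trace of its square and invoking cyclicity together with $V^\dag V=\mathds{I}$ yields $I(V\sigma V^\dag,K_A^\Lambda\otimes\mathds{I}_{BE})=I(\sigma,K_A^\Lambda\otimes\mathds{I}_B)$. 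Property~(3) from Sec.~II, applied to the bipartition $AB|E$, then converts this equality into the data-processing inequality $I(\Phi(\sigma),K_A^\Lambda\otimes\mathds{I}_B)\leq I(\sigma,K_A^\Lambda\otimes\mathds{I}_B)$.

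For step~(iii) I specialise to $\sigma=\rho_A\otimes\tau_B$. The factorisation $\sigma^{1/2}=\rho_A^{1/2}\otimes\tau_B^{1/2}$ gives $[\sigma^{1/2},K_A^\Lambda\otimes\mathds{I}_B]=[\rho_A^{1/2},K_A^\Lambda]\otimes\tau_B^{1/2}$; the trace of its square factorises into $\operatorname{Tr}[\rho_A^{1/2},K_A^\Lambda]^2\cdot\operatorname{Tr}\tau_B$, and $\operatorname{Tr}\tau_B=1$ yields $I(\rho_A\otimes\tau_B,K_A^\Lambda\otimes\mathds{I}_B)=I(\rho_A,K_A^\Lambda)$. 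Chaining (i)--(iii) produces the bound for every $\tau_B$; since the right-hand side is independent of $\tau_B$, the maximum over $\tau_B$ satisfies the same bound.

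The main obstacle is step~(ii): neither skew information nor the square-root identity $(V\sigma V^\dag)^{1/2}=V\sigma^{1/2}V^\dag$ survives an arbitrary quantum channel, and the commutation hypothesis on the Kraus operators is precisely what is needed to make the Stinespring isometry intertwine $K_A^\Lambda\otimes\mathds{I}_B$ with its extension $K_A^\Lambda\otimes\mathds{I}_{BE}$. Once that covariance is secured, property~(3) converts it into skew-information monotonicity, and the product-state reduction in step~(iii) is routine.
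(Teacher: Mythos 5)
Your proposal is correct and follows essentially the same route as the paper: the same three-link chain (LQU as a minimum is dominated by the skew information at the given $K_A^\Lambda$; monotonicity of skew information under channels whose Kraus operators commute with the observable; reduction of $I(\rho_A\otimes\tau_B,K_A^\Lambda\otimes\mathds{I}_B)$ to $I(\rho_A,K_A^\Lambda)$). The only difference is that where the paper cites Luo--Zhang and Luo--Fu--Oh for the two middle ingredients, you supply self-contained proofs (the Stinespring intertwining plus Lieb's partial-trace inequality for monotonicity, and the tensor factorisation of the commutator for the product-state reduction), and both of these check out.
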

\begin{proof}
First of all, according to the results given in Ref.~\cite{Luo2012},
i.e.,
\begin{equation}
	I(\rho_A\otimes\tau_B,X_A\otimes\mathds{I}_B)=I(\rho_A,X_A),
\end{equation}
we have
\begin{equation}
	I(\rho_A,K_A^\Lambda)=I(\rho_A\otimes \tau_B,K_A^\Lambda\otimes \mathds{I}_B).
\end{equation}
Based on Theorem 1 in Ref.~\cite{Luo2007}, which shows $I(\Phi\left(\rho\right),X)\leq I(\rho,X)$ if the quantum operation~$\Phi$ does not disturb the observable~$X$ in the sense that all $E_{j}$s commuting with~$X$, we obtain
\begin{equation}
	I(\rho_A\otimes \tau_B,K_A^\Lambda\otimes\mathds{I}_B)\geq I(\Phi(\rho_A\otimes \tau_B),K_A^\Lambda\otimes \mathds{I}_B)
\end{equation}
because we have assumed that
\begin{equation}
	[E_{j},K_A^\Lambda\otimes \mathds{I}_B]=0\;\forall j.
\end{equation}
Clearly
\begin{equation}
	I(\Phi(\rho_A\otimes \tau_B),K_A^\Lambda\otimes \mathds{I}_B)\geq \min\limits_{K_A^\Lambda}I(\Phi(\rho_A\otimes \tau_B),K_A^\Lambda\otimes \mathds{I}_B),
\end{equation}
the right-hand side of which is just the LQU of the state $\Phi(\rho_A\otimes \tau_B)$, i.e.,
\begin{equation}
	\mathcal{U}_A^\Lambda(\Phi(\rho_A\otimes \tau_B)).
\end{equation}
\end{proof}
\begin{remark}
Claim~\ref{claim:LQUSI}
demonstrates that the information content of the mixed state~$\rho_A$ for subsystem~$A$
bounds the discord-type quantum correlations for the joint AB system for any tensor product state of~$\rho_A$ in subsystem A and arbitrary state in subsystem~$B$.
\end{remark}
Now we investigate bounds on skew information in terms of LQU for quantum steering channels.
Alice and Bob are assumed to share a quantum correlated state $\rho_{AB}$ initially. Bob's state is steered to
\begin{equation}
	\rho_B^i
		=\frac{\bra{\theta_A^i}\rho_{AB}\ket{\theta_A^i}}{p^i},\;
	p^i
		=\operatorname{Tr}[\rho_{AB}(\theta_A^i\otimes \mathds{I}_B)]
\end{equation}
after Alice implements local projective measurement
\begin{equation}
	\theta_A^i=\ket{\theta^i}_A\bra{\theta^i},\;
	(i=0,1,\cdots,n_A-1),
\end{equation}
where~$n_A$ is the dimension of the subsystem $A$. If $K_B^\Lambda$ is just the von Neumann measurement with nondegenerate spectrum $\Lambda$ in LQU, i.e.,
\begin{equation}
	\mathcal{U}^\Lambda_B(\rho_{AB})=\min\limits_{K_B^\Lambda}I\left(\rho_{AB},\mathds{I}_A\otimes K_B^\Lambda\right),
\end{equation}
we define the steering-induced skew information in terms of $K_B^\Lambda$ as
\begin{equation}
	\bar{I}(\rho_B)
		:=\max\limits_{\Theta_A}\sum\limits_ip^iI(\rho_B^i,K_B^\Lambda).
\end{equation}
Here the maximization is taken over all of Alice's projective measurement basis $\Theta_A=\{\theta_A^i\}$
$(i=0,1,\cdots,n_A-1)$.

\begin{claim}
\label{claim:SILQU}
The steering-induced skew information is bounded above by LQU with respect to the subsystem $B$; i.e.,
\begin{equation}
	\bar{I}(\rho_B)\leq \mathcal{U}^\Lambda_B(\rho_{AB}).
\end{equation}
\end{claim}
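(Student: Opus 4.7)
The plan is to repackage the post-measurement ensemble $\{p^i,\rho_B^i\}$ into a single bipartite classical-quantum state and then apply the Luo monotonicity result invoked in Claim~\ref{claim:LQUSI} to Alice's non-selective measurement channel, whose Kraus operators are $\theta_A^i\otimes\mathds{I}_B$ and therefore commute trivially with any observable of the form $\mathds{I}_A\otimes K_B^\Lambda$.

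Concretely, I would fix an arbitrary Alice basis $\Theta_A=\{\theta_A^i\}$ and let $K_B^\Lambda$ be the LQU-minimizer appearing in the definition of $\bar{I}(\rho_B)$. Define the classical-quantum state $\sigma_{AB}:=\sum_i p^i\,\theta_A^i\otimes\rho_B^i$, which is precisely the output of the non-selective measurement channel $\rho_{AB}\mapsto\sum_i(\theta_A^i\otimes\mathds{I}_B)\rho_{AB}(\theta_A^i\otimes\mathds{I}_B)$. Because the rank-one projectors $\theta_A^i$ are pairwise orthogonal, $\sigma_{AB}$ is block-diagonal in the $\{\ket{\theta^i}_A\}$ basis and its positive square root factorizes as $\sqrt{\sigma_{AB}}=\sum_i\sqrt{p^i}\,\theta_A^i\otimes\sqrt{\rho_B^i}$. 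Expanding $[\sqrt{\sigma_{AB}},\,\mathds{I}_A\otimes K_B^\Lambda]$, using $\theta_A^i\theta_A^j=\delta_{ij}\theta_A^i$ to kill all cross terms, and taking the trace, I expect the clean identity $I(\sigma_{AB},\mathds{I}_A\otimes K_B^\Lambda)=\sum_i p^i\,I(\rho_B^i,K_B^\Lambda)$.

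With this identity in hand, the rest of the argument is monotonicity plus a maximization. Since $[\theta_A^i\otimes\mathds{I}_B,\,\mathds{I}_A\otimes K_B^\Lambda]=0$ for every $i$, Theorem~1 of Ref.~\cite{Luo2007} gives $I(\sigma_{AB},\mathds{I}_A\otimes K_B^\Lambda)\leq I(\rho_{AB},\mathds{I}_A\otimes K_B^\Lambda)$, and the right-hand side equals $\mathcal{U}^\Lambda_B(\rho_{AB})$ because $K_B^\Lambda$ was chosen as the LQU-minimizer. Chaining these relations yields $\sum_i p^i\,I(\rho_B^i,K_B^\Lambda)\leq\mathcal{U}^\Lambda_B(\rho_{AB})$ for every $\Theta_A$; maximizing the left-hand side over $\Theta_A$ then delivers the claim.

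The main obstacle is the square-root computation underlying the identity for $I(\sigma_{AB},\mathds{I}_A\otimes K_B^\Lambda)$: one must verify both that the block-diagonal form of $\sigma_{AB}$ really does produce the stated $\sqrt{\sigma_{AB}}$ and that the cross terms in the squared commutator vanish cleanly under the trace, so that the sum over $i$ survives with the correct weights $p^i$. Once that calculation is written out, each remaining step is an immediate application of a result already invoked in the excerpt.
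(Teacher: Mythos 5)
Your proposal is correct, but it reaches the bound by a genuinely different route than the paper. The paper keeps the ensemble as a sum: it applies the tensor-product identity $I(\theta_A^i\otimes\rho_B^i,\mathds{I}_A\otimes K_B^\Lambda)=I(\rho_B^i,K_B^\Lambda)$ term by term, rewrites $\theta_A^i\otimes\rho_B^i=\theta_A^i\rho_{AB}\theta_A^i/p^i$, and then invokes a monotonicity result from Ref.~\cite{Luo2012}, $I((\varepsilon_A\otimes\mathds{I}_B)\rho_{AB},\mathds{I}_A\otimes K_B^\Lambda)\leq I(\rho_{AB},\mathds{I}_A\otimes K_B^\Lambda)$, separately for each normalized post-measurement state; the weights then collapse via $\sum_i p^i=1$. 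You instead aggregate the ensemble into the single classical-quantum state $\sigma_{AB}=\sum_i p^i\,\theta_A^i\otimes\rho_B^i$, prove the exact decomposition $I(\sigma_{AB},\mathds{I}_A\otimes K_B^\Lambda)=\sum_i p^i I(\rho_B^i,K_B^\Lambda)$ by the block-orthogonal square root (this computation is correct: $\sqrt{\sigma_{AB}}=\sum_i\sqrt{p^i}\,\theta_A^i\otimes\sqrt{\rho_B^i}$ by uniqueness of the positive square root, and the cross terms in the squared commutator vanish because $\theta_A^i\theta_A^j=\delta_{ij}\theta_A^i$), and then apply Theorem~1 of Ref.~\cite{Luo2007} once to the trace-preserving non-selective measurement channel with Kraus operators $\theta_A^i\otimes\mathds{I}_B$. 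The trade-off: your route requires writing out the square-root calculation, but it only needs monotonicity under a bona fide quantum channel whose Kraus operators commute with the observable --- exactly the lemma already used in Claim~\ref{claim:LQUSI} --- whereas the paper's termwise step implicitly requires monotonicity of skew information under each individual selective outcome $\rho_{AB}\mapsto\theta_A^i\rho_{AB}\theta_A^i/p^i$, a slightly stronger input that it attributes to Ref.~\cite{Luo2012}. You also correctly handle the one place where care is needed, namely fixing $K_B^\Lambda$ to be the LQU minimizer so that the final right-hand side is $\mathcal{U}_B^\Lambda(\rho_{AB})$ rather than the larger $I(\rho_{AB},\mathds{I}_A\otimes K_B^\Lambda)$ for a generic observable.
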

\begin{proof}
In order to prove the result, we first note that
\begin{equation}
	\sum_ip^iI(\rho_B^i,K_B^\Lambda)=\sum_ip^iI(\theta_A^i\otimes \rho_B^i,\mathds{I}_A\otimes K_B^\Lambda)
\end{equation}
due to the fact
\begin{equation}
	I(\rho_A\otimes\tau_B,\mathds{I}_A\otimes K_B^\Lambda)
		=I(\tau_B,K_B^\Lambda),
\end{equation}
which has been proved in Eq.~(8) in Ref.~\cite{Luo2012}.
Based on the fact that
\begin{equation}
	\theta_A^i\otimes \rho_B^i
		=\frac{\theta_A^i\rho_{AB}\theta_A^i}{p^i},
\end{equation}
we obtain
\begin{equation}
	\sum_ip^iI(\theta_A^i\otimes \rho_B^i,\mathds{I}_A\otimes K_B^\Lambda)=\sum_ip^iI((\theta_A^i\rho_{AB}\theta_A^i)/p^i,\mathds{I}_A\otimes K_B^\Lambda).
\end{equation}
In Ref.~\cite{Luo2012}, the result
\begin{equation}
	I((\varepsilon_A\otimes \mathds{I}_B)\rho_{AB},\mathds{I}_A\otimes K_B^\Lambda)\leq I\left(\rho_{AB},\mathds{I}_A\otimes K_B^\Lambda\right)
\end{equation}
with~$\varepsilon_A$ being an operation on the state space of $H_A$ has been proved.
Therefore,
\begin{equation}
	\sum_ip^iI((\theta_A^i\rho_{AB}\theta_A^i)/p_i,\mathds{I}_A\otimes K_Ba)
		\leq \sum_ip^iI\left(\rho_{AB},\mathds{I}_A\otimes K_B^\Lambda\right),
\end{equation}
the right-hand side of which yields
\begin{equation}
	I\left(\rho_{AB},\mathds{I}_A\otimes K_B^\Lambda\right)\sum_ip^i
		=I\left(\rho_{AB},\mathds{I}_A\otimes K_B^\Lambda\right),
\end{equation}
i.e., LQU in terms of the observable of the subsystem $B$.
\end{proof}

\begin{remark}
Claim~\ref{claim:SILQU}
indicates that the discord-type quantum correlations for any state of the joint $AB$ system bounds the information content for a subsystem in the procedure of quantum steering.
\end{remark}

Obviously, the definition of the steering-induced skew information given above depends on the local von Neumann measurement. In order to eliminate it, we introduce the average steering-induced skew information as
\begin{equation}
\bar{Q}(\rho_B)=\max\limits_{\Theta_A}\sum\limits_{i,j}p^iI(\rho_B^i,X_B^{j}),
\end{equation}
where ${X_B^{j}}$ constitutes an orthonormal basis for the Hilbert space $\mathcal{L}(\mathscr{H}_B)$ of all observables on subsystem $B$. $\bar{Q}(\rho_B)$ depends only on Alice's projective measurement and the state shared by Alice and Bob, while is independent of the observable.

According to the proof procedure for Claim 2, we have
\begin{align}
	\sum_{j}I(\rho_B^i,X_B^{j})
		=&\sum_{j}I(\theta_A^i\otimes\rho_B^i,\mathds{I}_A\otimes X_B^{j})
			\nonumber\\
		=&\sum_{j}I\left((\theta_A^i\rho_{AB}\theta_A^i)/p^i,\mathds{I}_A\otimes X_B^{j}\right)
			\nonumber\\
		\leq &\sum_{j}I(\rho_{AB},\mathds{I}_A\otimes X_B^{j})
\end{align}
Subsequently, we have
\begin{equation}
	\bar{Q}(\rho_B)
		\leq\max\limits_{\Theta_A}\sum_ip^i\sum_{j}I(\rho_{AB},\mathds{I}_A\otimes X_B^{j})
		=\sum_{j}I(\rho_{AB},\mathds{I}_A\otimes X_B^{j})=Q_B(\rho_{AB}).
\end{equation}
\begin{remark}
The average steering-induced skew information is bounded above by the global information content of $\rho_{AB}$ with respect to the local observables of the subsystem $B$.
\end{remark}

\section{Conclusions}
Resource interconvertibility,
namely,
trading one resource for another,
motivates research on the transformation between quantum coherence and either entanglement or quantum correlations.
Here we investigate interconvertibility between skew information and LQU.
The former could be viewed as the measure of the informational content of mixed states, an asymmetry monotone, and even as a measure of quantum coherence.
The latter measure, LQU, is a full-fledged measure of bipartite quantum correlations.

Our work establishes bounds on skew information and LQU associated with state conversion via certain quantum channels.
In particular, by using bipartite channels whose Kraus operators commute with nondegenerate von Neumann measurements on the first subsystem,
discord-type quantum correlations for the joint system are bounded above by skew information of the subsystem.

On the contrary, for the procedure of quantum steering, LQU of the initially shared state is the upper bound of the steering induced skew information.
Furthermore, for the case of being independent of the observables, the average steering-induced skew information of the subsystem is bounded above by the global information content of the bipartite system with respect to the local observables.

\acknowledgements
L. Q.\ acknowledges support from the Fundamental Research Funds for the Central Universities under Grant No.\ 2015QNA44.
B. C. S.\ appreciates financial support from NSERC,
Alberta Innovates and China's 1000 Talent Plan.
This work was completed while Y. G. was visiting the Institute for Quantum Science and Technology at the University of Calgary under the support of the China Scholarship Council.

\end{document}